\documentclass[11pt]{article}
\usepackage[a4paper,margin=1in]{geometry}
\usepackage[T1]{fontenc}
\usepackage[utf8]{inputenc}
\usepackage{lmodern}
\usepackage{amsmath,amssymb,amsthm,mathtools,bm}
\usepackage{array,booktabs,enumitem}
\usepackage{tikz,pgfplots}
\usepackage{float}
\usepgfplotslibrary{groupplots}
\usetikzlibrary{positioning,calc}
\pgfplotsset{compat=1.18}
\usepackage[hidelinks]{hyperref}
\usepackage{microtype}

\newtheorem{proposition}{Proposition}
\newtheorem{theorem}{Theorem}
\newtheorem{remark}{Remark}
\newtheorem{definition}{Definition}

\begin{document}

\title{Entanglement, Evolutionary Stability, and Strategy-Space Dependence
in an EWL Quantum Game}
\author{Azhar Iqbal and Derek Abbott\\
{\normalsize School of Electrical \& Mechanical Engineering, Adelaide University,}\\
{\normalsize South Australia 5005, Australia}}
\date{}
\maketitle

\begin{abstract}
We study a symmetric $2\times 2$ game whose interior mixed symmetric Nash
equilibrium is not evolutionarily stable, and examine how this conclusion
changes under Eisert--Wilkens--Lewenstein (EWL) quantization. In the
restricted two-parameter EWL strategy space, the entanglement parameter
$\gamma$ enters the payoff explicitly and turns the corresponding quantum
equilibrium $s^*=(\pi/2,\pi/4)$ into a strict symmetric Nash equilibrium for
every $\gamma>0$, hence into an evolutionarily stable strategy (ESS) within
that restricted space. We then enlarge the strategy space to a full three-parameter pure
$SU(2)$ family that is consistent with the older EWL matrix convention. In
that enlarged space the same equilibrium remains a symmetric Nash
equilibrium, but it is no longer strict and it fails the second ESS condition
because a continuum of payoff-neutral mutants appears. The analysis
therefore yields a precise strategy-space dependent conclusion: entanglement
stabilizes the equilibrium in the restricted two-parameter EWL subspace, but
this stabilization does not survive enlargement to the full pure $SU(2)$
strategy set.
\end{abstract}

\section{Introduction}

One of the central questions in quantum game theory is how quantization
modifies the solution structure of a classical game, especially the
existence, location, and interpretation of Nash equilibria \cite%
{EWL,EW,EnkPike,BenjaminHayden}. The natural next question concerns
refinements of Nash equilibrium. If a quantum strategy profile reduces to a classical equilibrium in the
zero-entanglement limit, can its refinement status change when entanglement is
introduced? Evolutionary
stability is a particularly interesting test case because it refines
the symmetric Nash equilibrium by requiring robustness against invasion by rare
mutants \cite{MaynardSmithPrice,MaynardSmith,HofbauerSigmund,Lessard,Bomze}.

Earlier work already showed that entanglement can affect evolutionary
stability in quantum games, especially in the Marinatto--Weber framework and
related settings \cite{IqbalESS}. The present paper complements that
literature by adopting the EWL framework, where the entanglement parameter $%
\gamma$ appears explicitly in the payoff functions, and by comparing two
different strategic domains: (a) the restricted two-parameter EWL strategy
set and (b) the full pure $SU(2)$ family. This comparison is important because the
dependence of EWL equilibria on the chosen strategic domain has been
emphasized since the critique of Benjamin and Hayden \cite{BenjaminHayden},
and has remained a recurring theme in later analyses \cite%
{FlitneyHollenberg,Frackiewicz2022}.

The main result is deliberately narrow and precise. We focus on a symmetric $%
2\times 2$ game in which the interior mixed symmetric equilibrium $p^*=1/2$
is a Nash equilibrium but not an ESS. The pure equilibria of the same
classical game are strict and evolutionarily stable; hence the issue addressed
here is specifically the evolutionary status of the equilibrium corresponding
to the interior mixed equilibrium. In the restricted two-parameter EWL space,
nonzero entanglement turns the corresponding quantum equilibrium $%
s^*=(\pi/2,\pi/4)$ into a strict symmetric Nash equilibrium and therefore
an ESS. The principal result established here is that this stabilization is not
robust under enlargement to the full pure $SU(2)$ strategy space: a continuum of neutral mutants appears, and the second ESS
condition fails.

The paper is organized as follows. Section~\ref{sec:preliminaries} recalls
the classical ESS criterion. Section~\ref{sec:classical} analyzes the
underlying symmetric classical game. Section~\ref{sec:ewl} reviews the EWL
scheme and fixes the notation. Section~\ref{sec:two-parameter} proves the
stabilization result in the restricted two-parameter EWL space. Section~\ref%
{sec:full-su2} derives the corresponding formulas in the full pure $SU(2)$
space and proves the loss of evolutionary stability there. Section~\ref%
{sec:figures} provides visual diagnostics of the payoff gaps. Section~\ref%
{sec:discussion} discusses the implications and concludes. For completeness,
Appendices~\ref{app:derivations} and~\ref{app:su2-derivation} collect the
algebra leading to the compact payoff expressions used in the main text.

\section{Preliminaries on symmetric Nash equilibrium and ESS}

\label{sec:preliminaries} Consider a symmetric two-player game with payoff
function $P(x,y)$, interpreted as the payoff to strategy $x$ against
strategy $y$. A symmetric Nash equilibrium $x^*$ satisfies 
\begin{equation}
P(x^*,x^*)-P(x,x^*)\ge 0 \qquad \text{for all } x.  \label{eq:NE-general}
\end{equation}

\begin{definition}
A symmetric strategy $x^*$ is an evolutionarily stable strategy (ESS) if for
every mutant $x\neq x^*$ either 
\begin{equation}
P(x^*,x^*)>P(x,x^*),  \label{eq:ESS1}
\end{equation}
or 
\begin{equation}
P(x^*,x^*)=P(x,x^*) \quad \text {and} \quad P(x^*,x)>P(x,x).  \label{eq:ESS2}
\end{equation}
\end{definition}

Thus every ESS is a symmetric Nash equilibrium, but not every symmetric Nash
equilibrium is evolutionarily stable. We will use \eqref{eq:ESS1} and %
\eqref{eq:ESS2} in exactly this form throughout.

\section{The underlying classical symmetric game}

\label{sec:classical} Consider the symmetric bimatrix game 
\begin{equation}
\begin{array}{c|cc}
& S_1 & S_2 \\ \hline
S_1 & (r,r) & (s,t) \\ 
S_2 & (t,s) & (u,u)%
\end{array}
\label{eq:Matrix}
\end{equation}
and let Alice and Bob play $S_1$ with probabilities $p$ and $q$,
respectively. Alice's expected payoff is 
\begin{equation}
P_A(p,q)=rpq+sp(1-q)+t(1-p)q+u(1-p)(1-q).  \label{eq:Payoffs}
\end{equation}
By symmetry, $P_A(p,q)=P_B(q,p)$, and we write $P(p,q)$ for the payoff to
the $p$-player against the $q$-player.

We specialize to the parameter regime 
\begin{equation}
s=t, \qquad u=r, \qquad r-t>0.  \label{eq:GameDefinition}
\end{equation}
The game then takes the simpler form 
\begin{equation}
\begin{array}{c|cc}
& S_1 & S_2 \\ \hline
S_1 & (r,r) & (t,t) \\ 
S_2 & (t,t) & (r,r)%
\end{array}%
.  \label{eq:Matrix-symmetric-special}
\end{equation}

The pure symmetric equilibria $p=0$ and $p=1$ are strict Nash equilibria in
this game, since a unilateral deviation from either coordinated pure outcome
changes the deviator's payoff from $r$ to $t<r$. Hence both pure equilibria
are ESSs. The analysis below focuses instead on the interior mixed
equilibrium $p^*=1/2$, because it is the classical equilibrium whose
evolutionary status changes in the restricted EWL formulation.

\begin{proposition}
For the game \eqref{eq:Matrix} under \eqref{eq:GameDefinition}, the mixed
strategy 
\begin{equation}
p^*=\frac12
\end{equation}
is a symmetric Nash equilibrium but not an ESS.
\end{proposition}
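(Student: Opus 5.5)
The plan is to substitute $s=t$, $u=r$ into \eqref{eq:Payoffs} and collect terms, obtaining a closed form for $P(p,q)$ that makes both ESS conditions transparent. A short expansion gives
\begin{equation*}
P(p,q)=r\bigl[pq+(1-p)(1-q)\bigr]+t\bigl[p(1-q)+(1-p)q\bigr]
= r-(r-t)\bigl(p+q-2pq\bigr),
\end{equation*}
which I would rewrite as
\begin{equation*}
P(p,q)=\tfrac{r+t}{2}+\tfrac{r-t}{2}\,(2p-1)(2q-1).
\end{equation*}
All subsequent steps are read off from this factorized form.

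For the Nash property, set $q=p^*=1/2$. The factor $(2q-1)$ then vanishes, so $P(p,p^*)=(r+t)/2$ for every $p\in[0,1]$. In particular $P(p^*,p^*)-P(p,p^*)=0$ for all $p$, and \eqref{eq:NE-general} holds with equality. Hence $p^*$ is a symmetric Nash equilibrium. Every mutant is payoff-neutral against $p^*$, so \eqref{eq:ESS1} fails for every $p\neq p^*$. Evolutionary stability therefore rests entirely on the second condition \eqref{eq:ESS2}.

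To test \eqref{eq:ESS2}, compute $P(p^*,p)=(r+t)/2$ and $P(p,p)=(r+t)/2+\tfrac{r-t}{2}(2p-1)^2$. Their difference is
\begin{equation*}
P(p^*,p)-P(p,p)=-\tfrac{r-t}{2}(2p-1)^2 .
\end{equation*}
Since $r-t>0$, this is strictly negative for every $p\neq 1/2$. So \eqref{eq:ESS2} fails for every mutant, and indeed every mutant strictly invades; taking $p=1$ or $p=0$ suffices as an explicit witness. Hence $p^*$ is not an ESS.

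The only step that needs care is the algebraic factorization of $P$. I would check it at the four corners: $P(1,1)=P(0,0)=r$ and $P(1,0)=P(0,1)=t$. Since both sides are bilinear in $(p,q)$, agreement at these four points confirms the identity. After that, the argument is immediate.
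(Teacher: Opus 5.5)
Your proof is correct and follows essentially the paper's route. You substitute $s=t$, $u=r$, observe that every mutant is payoff-neutral against $p^*=\tfrac12$ (so the first ESS condition fails), and then show the second condition fails. Your gap $-\tfrac{r-t}{2}(2p-1)^2$ equals the paper's $(r-t)\bigl(2p(1-p)-\tfrac12\bigr)$, and your factorized form of $P$ makes its strict negativity for every $p\neq\tfrac12$ immediate, which is slightly sharper than the paper's single witness at $p=0$.
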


\begin{proof}
Under \eqref{eq:GameDefinition}, direct substitution into \eqref{eq:Payoffs} gives
\begin{equation}
P(p^*,p^*)-P(p,p^*)=(p^*-p)(r-t)(2p^*-1).
\label{eq:PayoffDiff}
\end{equation}
Hence $p^*=1/2$ is a symmetric Nash equilibrium.

To test evolutionary stability, note that
\begin{equation}
P\!\left(\frac12,\frac12\right)-P\!\left(p,\frac12\right)=0
\qquad \text{for all }p,
\end{equation}
so the strict inequality \eqref{eq:ESS1} does not apply. The second condition yields
\begin{equation}
P\!\left(\frac12,p\right)-P(p,p)=(r-t)\left(2p(1-p)-\frac12\right).
\label{eq:PayoffDiff1}
\end{equation}
This is not strictly positive for all $p\neq 1/2$; for example it is negative at $p=0$. Therefore $p^*=1/2$ is not an ESS.
\end{proof}

At the mixed equilibrium, both players receive 
\begin{equation}
P_A\!\left(\frac12,\frac12\right)=P_B\!\left(\frac12,\frac12\right)=\frac{r+t%
}{2}.  \label{eq:EqPayoffC}
\end{equation}
This value will reappear in the restricted quantum analysis below.

\section{The EWL quantization scheme}

\label{sec:ewl} We use the older EWL matrix convention for the restricted
two-parameter strategic set, 
\begin{equation}
U(\theta,\phi)= 
\begin{pmatrix}
e^{i\phi}\cos(\theta/2) & \sin(\theta/2) \\ 
-\sin(\theta/2) & e^{-i\phi}\cos(\theta/2)%
\end{pmatrix}%
, \qquad \theta\in[0,\pi],\quad \phi\in\left[0,\frac{\pi}{2}\right].
\label{eq:EWLUnitary}
\end{equation}
Let 
\begin{equation}
D=U(\pi,0)= 
\begin{pmatrix}
0 & 1 \\ 
-1 & 0%
\end{pmatrix}%
, \qquad J=\exp\!\left(i\gamma D\otimes D/2\right), \qquad \gamma\in\left[0,%
\frac{\pi}{2}\right].  \label{eq:EWLentangler}
\end{equation}
The initial state is 
\begin{equation}
|\psi_i\rangle=J|S_1S_1^{\prime }\rangle,
\end{equation}
which is separable at $\gamma=0$ and maximally entangled at $\gamma=\pi/2$ 
\cite{EWL,EWLerratum}. After the players apply local unitary operations $U_A$ and $U_B$, the
final state becomes 
\begin{equation}
|\psi_f\rangle=J^{\dagger}(U_A\otimes U_B)J|S_1S_1^{\prime }\rangle.
\end{equation}

The generic EWL payoff rule is obtained by weighting the four measurement
outcomes. To avoid confusing the entanglement parameter $\gamma$ with payoff
coefficients, we denote the latter by $a,b,c,d$ in the generic formula: 
\begin{align}
\Pi_A(U_A,U_B)&=a\,|\langle S_1S_1^{\prime }|\psi_f\rangle|^2+b\,|\langle
S_1S_2^{\prime }|\psi_f\rangle|^2+c\,|\langle S_2S_1^{\prime
}|\psi_f\rangle|^2+d\,|\langle S_2S_2^{\prime }|\psi_f\rangle|^2,  \notag \\
\Pi_B(U_A,U_B)&=a\,|\langle S_1S_1^{\prime }|\psi_f\rangle|^2+c\,|\langle
S_1S_2^{\prime }|\psi_f\rangle|^2+b\,|\langle S_2S_1^{\prime
}|\psi_f\rangle|^2+d\,|\langle S_2S_2^{\prime }|\psi_f\rangle|^2.
\label{eq:EWLpayoffs}
\end{align}

The EWL Prisoner's Dilemma famously exhibits a distinctive equilibrium in
the restricted two-parameter space, but Benjamin and Hayden showed that this
equilibrium need not survive once the strategic space is enlarged to the full
set of deterministic local $SU(2)$ operations \cite{BenjaminHayden}. This strategy-space dependence
is central to the present paper as well.

\section{Evolutionary stability in the restricted two-parameter EWL space}

\label{sec:two-parameter} For the symmetric payoff matrix \eqref{eq:Matrix},
the expected payoffs are 
\begin{equation}
\Pi_A=rP_{CC}+sP_{CD}+tP_{DC}+uP_{DD}, \qquad
\Pi_B=rP_{CC}+tP_{CD}+sP_{DC}+uP_{DD},  \label{eq:payoffs-rstu}
\end{equation}
where the four outcome probabilities sum to $1$.

Define 
\begin{equation}
c_A=\cos\frac{\theta_A}{2},\quad s_A=\sin\frac{\theta_A}{2}, \qquad c_B=\cos%
\frac{\theta_B}{2},\quad s_B=\sin\frac{\theta_B}{2}, \qquad
\Phi=\phi_A+\phi_B.  \label{eq:notation-two-parameter}
\end{equation}
Then in the older EWL convention, 
\begin{align}
P_{CC}&=c_A^2c_B^2\bigl(\cos^2\Phi+\cos^2\gamma\,\sin^2\Phi\bigr),
\label{eq:PCC-2p} \\
P_{DD}&=\bigl(s_As_B+\sin\gamma\,\sin\Phi\,c_Ac_B\bigr)^2,  \label{eq:PDD-2p}
\\
P_{CD}&=\bigl(\sin\gamma\,\sin\phi_B\,s_Ac_B-\cos\phi_A\,c_As_B\bigr)^2
+\cos^2\gamma\,\sin^2\phi_A\,c_A^2s_B^2,  \label{eq:PCD-2p} \\
P_{DC}&=\bigl(\sin\gamma\,\sin\phi_A\,c_As_B-\cos\phi_B\,s_Ac_B\bigr)^2
+\cos^2\gamma\,\sin^2\phi_B\,s_A^2c_B^2.  \label{eq:PDC-2p}
\end{align}
A derivation is given in Appendix~\ref{app:derivations}. For the special
symmetric game \eqref{eq:GameDefinition}, both players receive the same
payoff: 
\begin{equation}
\Pi_A=\Pi_B=r(P_{CC}+P_{DD})+t(P_{CD}+P_{DC})=t+(r-t)(P_{CC}+P_{DD}).
\label{eq:payoff-symmetric-2p-raw}
\end{equation}
Combining \eqref{eq:PCC-2p} and \eqref{eq:PDD-2p} yields 
\begin{equation}
P_{CC}+P_{DD}=c_A^2c_B^2+s_A^2s_B^2+2\sin\gamma\,\sin(\phi_A+%
\phi_B)c_Ac_Bs_As_B.  \label{eq:pccpdd-simplified}
\end{equation}
Using 
\begin{equation}
2c_As_A=\sin\theta_A, \qquad 2c_Bs_B=\sin\theta_B, \qquad
c_A^2c_B^2+s_A^2s_B^2=\frac{1+\cos\theta_A\cos\theta_B}{2},
\end{equation}
we obtain the compact common payoff 
\begin{equation}
P(s_A,s_B)=\frac{r+t}{2}+\frac{r-t}{2}\Bigl(\cos\theta_A\cos\theta_B+\sin%
\gamma\,\sin(\phi_A+\phi_B)\sin\theta_A\sin\theta_B\Bigr).
\label{eq:PayoffsQG}
\end{equation}
Equivalently, 
\begin{equation}
P(s_A,s_B)=t+\frac{r-t}{2}\Bigl(1+\cos\theta_A\cos\theta_B+\sin\gamma\,\sin(%
\phi_A+\phi_B)\sin\theta_A\sin\theta_B\Bigr).  \label{eq:PayoffsQG-alt}
\end{equation}

We now write the strategies as $s_A\equiv(\theta_A,\phi_A)$ and $%
s_B\equiv(\theta_B,\phi_B)$. At $\gamma=0$, the classical embedding is
$p=\cos^2(\theta/2)$, while the phase $\phi$ does not affect measurement
probabilities. Hence $\theta=\pi/2$ represents the classical mixed strategy
$p=1/2$, and $s^*=(\pi/2,\pi/4)$ is a convenient representative of that
zero-entanglement equivalence class.

\begin{theorem}
Let the equilibrium candidate in the restricted two-parameter EWL strategic space be 
\begin{equation}
s^*=\left(\frac{\pi}{2},\frac{\pi}{4}\right).  \label{eq:sstar-two-parameter}
\end{equation}
Then:

\begin{enumerate}

\item at $\gamma=0$, the strategy $s^*$ is a symmetric Nash equilibrium but
not an ESS;

\item for every $\gamma>0$, the strategy $s^*$ is a strict symmetric Nash
equilibrium and therefore an ESS within the restricted two-parameter EWL
space.
\end{enumerate}
\end{theorem}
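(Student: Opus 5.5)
Everything rests on the compact payoff \eqref{eq:PayoffsQG}. The plan is to evaluate it at the candidate $s^*=(\pi/2,\pi/4)$ and compute the payoff gap against an arbitrary mutant $s=(\theta,\phi)$ in $[0,\pi]\times[0,\pi/2]$.

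Since $\cos(\pi/2)=0$ and $\sin(\pi/2)=1$, we get
\begin{equation*}
P(s^*,s^*)=\frac{r+t}{2}+\frac{r-t}{2}\sin\gamma,\qquad
P(s,s^*)=\frac{r+t}{2}+\frac{r-t}{2}\sin\gamma\,\sin\!\left(\phi+\frac{\pi}{4}\right)\sin\theta .
\end{equation*}
Hence the Nash gap is
\begin{equation*}
P(s^*,s^*)-P(s,s^*)=\frac{r-t}{2}\sin\gamma\Bigl(1-\sin\theta\,\sin\!\left(\phi+\tfrac{\pi}{4}\right)\Bigr).
\end{equation*}
On the restricted domain, $\sin\theta\in[0,1]$ and $\phi+\pi/4\in[\pi/4,3\pi/4]$, so $\sin(\phi+\pi/4)\in[1/\sqrt2,1]$. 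The bracket is therefore nonnegative.

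\textbf{Part 1 ($\gamma=0$).} The gap vanishes identically, so $s^*$ is a symmetric Nash equilibrium but \eqref{eq:ESS1} never holds. For the second condition I compute
\begin{equation*}
P(s^*,s)-P(s,s)=-\frac{r-t}{2}\cos^2\theta .
\end{equation*}
This is strictly negative for any mutant with $\theta\neq\pi/2$, for example $\theta=0$. So \eqref{eq:ESS2} fails and $s^*$ is not an ESS. This mirrors the classical Proposition through $p=\cos^2(\theta/2)$.

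\textbf{Part 2 ($\gamma>0$).} Here $\sin\gamma>0$ and $r-t>0$, so it suffices to show the bracket is strictly positive for every $s\neq s^*$. The bracket vanishes only if $\sin\theta=1$ and $\sin(\phi+\pi/4)=1$ hold simultaneously. That forces $\theta=\pi/2$ and $\phi+\pi/4=\pi/2$, i.e.\ $\phi=\pi/4$, which is exactly $s^*$.

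Thus \eqref{eq:ESS1} holds for every mutant, $s^*$ is a strict symmetric Nash equilibrium, and by the Definition it is an ESS within the restricted space.

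\textbf{Main obstacle.} The one delicate step is the uniqueness of the maximizer in Part 2, which depends on the domain restrictions. It uses $\theta\in[0,\pi]$, so that $\sin\theta=1$ only at $\pi/2$, and $\phi\in[0,\pi/2]$, so that $\phi+\pi/4$ stays in $[\pi/4,3\pi/4]$ and attains $\pi/2$ only at $\phi=\pi/4$. I would state this domain dependence explicitly, since it is precisely what fails in the full $SU(2)$ enlargement treated later.
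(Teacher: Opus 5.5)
Your proposal is correct and follows the paper's own proof. Like the paper, you evaluate the compact payoff \eqref{eq:PayoffsQG} at $s^*$ to obtain the Nash gap $\frac{r-t}{2}\sin\gamma\,[1-\sin\theta\,\sin(\phi+\pi/4)]$, show that the second ESS gap is $-\frac{r-t}{2}\cos^2\theta$ at $\gamma=0$, and use the domain bounds to show that for $\gamma>0$ the bracket vanishes only at $s^*$. The one cosmetic difference is that you compute the second-condition gap directly at $\gamma=0$, whereas the paper first derives the general-$\gamma$ expression \eqref{eq:ESSSecondQ} and then specializes.
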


\begin{proof}
For a general mutant $s=(\theta,\phi)$, equation \eqref{eq:PayoffsQG} gives
\begin{equation}
P(s^*,s^*)-P(s,s^*)=
\frac{r-t}{2}\Bigl[
\cos\theta^*(\cos\theta^*-\cos\theta)
+\sin\gamma\,\sin\theta^*\bigl(\sin\theta^*\sin(2\phi^*)-\sin\theta\sin(\phi+\phi^*)\bigr)
\Bigr].
\label{eq:SymmNE}
\end{equation}
At $s^*=(\pi/2,\pi/4)$ this reduces to
\begin{equation}
P(s^*,s^*)-P(s,s^*)=
\frac{r-t}{2}\sin\gamma\left[1-\sin\theta\,\sin\left(\phi+\frac{\pi}{4}\right)\right].
\label{eq:NEQG}
\end{equation}
If $\gamma=0$, the right-hand side vanishes identically, so $s^*$ is a non-strict symmetric Nash equilibrium. Since
\begin{equation}
P(s^*,s^*)=\frac{r+t}{2},
\end{equation}
this coincides with the classical equilibrium payoff \eqref{eq:EqPayoffC}. To test evolutionary stability at $\gamma=0$, use the second ESS condition. Direct substitution into \eqref{eq:PayoffsQG} gives
\begin{equation}
P(s^*,s)-P(s,s)=
-\frac{1}{2}(r-t)\cos^2\theta
+\frac{1}{2}(r-t)\sin\gamma\,\sin\theta\left[\sin\left(\phi+\frac{\pi}{4}\right)-\sin\theta\sin(2\phi)\right].
\label{eq:ESSSecondQ}
\end{equation}
At $\gamma=0$ this becomes
\begin{equation}
P(s^*,s)-P(s,s)= -\frac{1}{2}(r-t)\cos^2\theta,
\end{equation}
which is not strictly positive for all $s\neq s^*$. Thus $s^*$ is not an ESS at $\gamma=0$.

Now suppose $\gamma>0$. In the EWL domain $\theta\in[0,\pi]$ and $\phi\in[0,\pi/2)$, hence
\begin{equation}
0\le \sin\theta\le 1,
\qquad
\frac{1}{\sqrt{2}}\le
\sin\left(\phi+\frac{\pi}{4}\right)\le 1.
\end{equation}
Their product equals $1$ only at $\theta=\pi/2$ and $\phi=\pi/4$, i.e. only at $s=s^*$. Therefore the bracket in \eqref{eq:NEQG} is strictly positive for every mutant $s\neq s^*$, and because $r-t>0$ and $\sin\gamma>0$, one has
\begin{equation}
P(s^*,s^*)>P(s,s^*)
\qquad \text{for all } s\neq s^*.
\end{equation}
Hence $s^*$ is a strict symmetric Nash equilibrium and therefore an ESS.
\end{proof}

\begin{remark}
The theorem should not be read as saying that entanglement generically makes
symmetric equilibria evolutionarily stable in EWL games. The statement is
specific to the game \eqref{eq:Matrix} under \eqref{eq:GameDefinition} and
to the restricted two-parameter EWL strategic domain.
\end{remark}

\section{Evolutionary stability in the full pure 
\texorpdfstring{$SU(2)$}{SU(2)} strategy space}

\label{sec:full-su2} We now enlarge the strategy set to the full pure $SU(2)$
family while keeping a convention compatible with \eqref{eq:EWLUnitary}: 
\begin{equation}
U(\theta,\phi,\alpha)= 
\begin{pmatrix}
e^{i\phi}\cos(\theta/2) & e^{i\alpha}\sin(\theta/2) \\ 
-e^{-i\alpha}\sin(\theta/2) & e^{-i\phi}\cos(\theta/2)%
\end{pmatrix}%
, \qquad \theta\in[0,\pi],\quad \phi,\alpha\in[0,2\pi).
\label{eq:qstrategy_su2}
\end{equation}
Setting $\alpha=0$ recovers the two-parameter matrix \eqref{eq:EWLUnitary}
exactly. This parametrization covers $SU(2)$; its familiar coordinate
redundancies at $\theta=0$ and $\theta=\pi$ do not enter the argument below.
As physical strategies, $U$ and $-U$ are also operationally equivalent because
they differ only by an overall phase. This identification removes only
isolated members of the neutral family constructed below and does not remove
that continuum. We keep the same entangling operator \eqref{eq:EWLentangler}. Let
the equilibrium be the natural extension of the earlier candidate, 
\begin{equation}
s^*=\left(\frac{\pi}{2},\frac{\pi}{4},0\right),
\label{eq:sstar-three-parameter}
\end{equation}
and let the mutant be $s=(\theta,\phi,\alpha)$.

Define 
\begin{equation}
c_i=\cos\frac{\theta_i}{2}, \qquad d_i=\sin\frac{\theta_i}{2}, \qquad i=A,B.
\end{equation}
Under \eqref{eq:GameDefinition}, both players again receive the same payoff, 
\begin{equation}
P(s_A,s_B)=t+(r-t)(P_{CC}+P_{DD}),  \label{eq:payoff_symmetric_su2}
\end{equation}
where now 
\begin{align}
P_{CC}&=\Bigl(c_Ac_B\cos(\phi_A+\phi_B)-\sin\gamma\,d_Ad_B\sin(\alpha_A+%
\alpha_B)\Bigr)^2 +\cos^2\gamma\,c_A^2c_B^2\sin^2(\phi_A+\phi_B),
\label{eq:PCC-su2} \\
P_{DD}&=\Bigl(\sin\gamma\,c_Ac_B\sin(\phi_A+\phi_B)+d_Ad_B\cos(\alpha_A+%
\alpha_B)\Bigr)^2 +\cos^2\gamma\,d_A^2d_B^2\sin^2(\alpha_A+\alpha_B).
\label{eq:PDD-su2}
\end{align}
Appendix~\ref{app:derivations} shows how these formulas arise from the EWL
circuit.

\begin{theorem}
In the full pure $SU(2)$ strategic space \eqref{eq:qstrategy_su2}, the
equilibrium
\begin{equation}
s^*=\left(\frac{\pi}{2},\frac{\pi}{4},0\right)
\end{equation}
is a symmetric Nash equilibrium for every $\gamma\in[0,\pi/2]$, but it is
non-strict and is not an ESS for any such $\gamma$.
\end{theorem}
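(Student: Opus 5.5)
The plan is to first collapse $P_{CC}+P_{DD}$ in \eqref{eq:PCC-su2}--\eqref{eq:PDD-su2} to a compact form analogous to \eqref{eq:PayoffsQG}, and then read off all three claims from it. Write $\Phi=\phi_A+\phi_B$ and $A=\alpha_A+\alpha_B$.

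\textbf{Step 1: compact payoff.} Expanding the two squares and grouping terms, the $c_A^2c_B^2$ terms combine through $\cos^2\Phi+(\sin^2\gamma+\cos^2\gamma)\sin^2\Phi=1$, and the $d_A^2d_B^2$ terms combine in the same way. The two cross terms are $2\sin\gamma\,c_Ac_Bd_Ad_B(\sin\Phi\cos A-\cos\Phi\sin A)$. Hence
\begin{equation*}
P_{CC}+P_{DD}=c_A^2c_B^2+d_A^2d_B^2+2\sin\gamma\,c_Ac_Bd_Ad_B\sin(\Phi-A).
\end{equation*}
Using the same half-angle identities as in Section~\ref{sec:two-parameter}, this gives
\begin{equation*}
P(s_A,s_B)=\frac{r+t}{2}+\frac{r-t}{2}\Bigl(\cos\theta_A\cos\theta_B+\sin\gamma\,\sin\theta_A\sin\theta_B\,\sin\bigl(\phi_A+\phi_B-\alpha_A-\alpha_B\bigr)\Bigr).
\end{equation*}
This is \eqref{eq:PayoffsQG} with $\phi_A+\phi_B$ replaced by $\phi_A+\phi_B-\alpha_A-\alpha_B$. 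This algebra is the only real work. The main point to check is that the cross terms really combine into $\sin(\Phi-A)$ with the stated sign, since the sign is what makes $s^*$ an equilibrium.

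\textbf{Step 2: Nash property.} Substitute $s^*=(\pi/2,\pi/4,0)$ and a mutant $s=(\theta,\phi,\alpha)$. We get $P(s^*,s^*)=\frac{r+t}{2}+\frac{r-t}{2}\sin\gamma$ and
\begin{equation*}
P(s^*,s^*)-P(s,s^*)=\frac{r-t}{2}\sin\gamma\Bigl[1-\sin\theta\,\sin\Bigl(\phi-\alpha+\frac{\pi}{4}\Bigr)\Bigr].
\end{equation*}
Since $\sin\theta\le1$ and $\sin(\cdot)\le 1$, this difference is $\ge0$ for every $\gamma\in[0,\pi/2]$. So $s^*$ is a symmetric Nash equilibrium throughout, mirroring \eqref{eq:NEQG}.

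\textbf{Step 3: non-strictness and failure of ESS.} The difference vanishes exactly when $\theta=\pi/2$ and $\phi-\alpha\equiv\pi/4\pmod{2\pi}$, which holds whenever $\gamma>0$ (for $\gamma=0$ it vanishes for all $s$). In particular, the one-parameter family
\begin{equation*}
s_\alpha=\bigl(\pi/2,\ \pi/4+\alpha,\ \alpha\bigr),\qquad \alpha\in(0,2\pi),
\end{equation*}
consists of neutral mutants, and they are distinct from $s^*$. Note that $-U$ shifts $\phi$ and $\alpha$ both by $\pi$ (modulo the coordinate conventions), so it meets this family only at an isolated point and does not remove the continuum.

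For these mutants we check \eqref{eq:ESS2}:
\begin{itemize}
\item $P(s^*,s_\alpha)$ contains $\sin(\pi/4+\pi/4+\alpha-\alpha)=1$.
\item $P(s_\alpha,s_\alpha)$ contains $\cos^2(\pi/2)=0$ and $\sin(2(\phi-\alpha))=\sin(\pi/2)=1$.
\end{itemize}
Both payoffs therefore equal $\frac{r+t}{2}+\frac{r-t}{2}\sin\gamma$. So $P(s^*,s_\alpha)=P(s_\alpha,s_\alpha)$, the strict inequality in \eqref{eq:ESS2} fails, and $s^*$ is not an ESS for any $\gamma$.

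At $\gamma=0$ the same conclusion also follows directly: every $\theta=\pi/2$ mutant is neutral, and the computation from the previous theorem (with $\alpha$ irrelevant) applies.
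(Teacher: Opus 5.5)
Your proof is correct and follows essentially the same route as the paper. It uses the same Nash gap $\frac{r-t}{2}\sin\gamma\,[1-\sin\theta\,\sin(\phi-\alpha+\pi/4)]$, the same bound for the Nash property, and the same neutral family $\theta=\pi/2$, $\phi-\alpha\equiv\pi/4 \pmod{2\pi}$ (with the same isolated $-U^*$ caveat) to annul both the Nash gap and the second ESS gap. The only difference is that you first reduce $P_{CC}+P_{DD}$ to a general compact form with $\sin(\phi_A+\phi_B-\alpha_A-\alpha_B)$, whereas the paper's appendix substitutes directly at $s^*$ and on the diagonal; your version makes explicit the symmetry $P(s,s^*)=P(s^*,s)$ that the paper uses implicitly.
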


\begin{proof}
Substituting $s^*=(\pi/2,\pi/4,0)$ and a general mutant $s=(\theta,\phi,\alpha)$ into \eqref{eq:payoff_symmetric_su2} gives
\begin{equation}
P(s^*,s)=t+\frac{r-t}{2}\left[1+\sin\gamma\,\sin\theta\,\sin\left(\phi+\frac{\pi}{4}-\alpha\right)\right],
\label{eq:Psstar_s_su2}
\end{equation}
whereas mutant self-play yields
\begin{equation}
P(s,s)=t+\frac{r-t}{2}\left[1+\cos^2\theta+\sin\gamma\,\sin^2\theta\,\sin\bigl(2(\phi-\alpha)\bigr)\right].
\label{eq:Pss_su2}
\end{equation}
Hence the full-pure-$SU(2)$ analogue of \eqref{eq:ESSSecondQ} is
\begin{equation}
P(s^*,s)-P(s,s)=\frac{r-t}{2}\left[-\cos^2\theta+\sin\gamma\,\sin\theta\left(\sin\left(\phi+\frac{\pi}{4}-\alpha\right)-\sin\theta\,\sin\bigl(2(\phi-\alpha)\bigr)\right)\right].
\label{eq:ESSSecondQ_SU2}
\end{equation}
This reduces to the two-parameter expression \eqref{eq:ESSSecondQ} when $\alpha=0$.

Similarly,
\begin{equation}
P(s^*,s^*)-P(s,s^*)=\frac{r-t}{2}\sin\gamma\left[1-\sin\theta\,\sin\left(\phi+\frac{\pi}{4}-\alpha\right)\right].
\label{eq:NEdiff_su2}
\end{equation}
Because $0\leq\sin\theta\leq 1$ and $-1\leq\sin(\phi+\pi/4-\alpha)\leq 1$,
equation \eqref{eq:NEdiff_su2} is non-negative for all mutants. Thus $s^*$
remains a symmetric Nash equilibrium. However, it is not strict. To see this,
consider the family of mutants satisfying
\begin{equation}
\theta=\frac{\pi}{2},
\qquad
\alpha=\phi-\frac{\pi}{4}\pmod{2\pi}.
\label{eq:flat-direction}
\end{equation}
Apart from $\phi=\pi/4$, which gives $U=U^*$, and
$\phi=5\pi/4$, which gives the operationally equivalent unitary $U=-U^*$,
the family consists of physically distinct mutants. Choose any other member,
so that $s\neq s^*$ even after quotienting by the irrelevant global sign.
For these mutants,
\begin{equation}
\sin\theta\,\sin\left(\phi+\frac{\pi}{4}-\alpha\right)=1,
\end{equation}
so \eqref{eq:NEdiff_su2} gives
\begin{equation}
P(s^*,s^*)-P(s,s^*)=0.
\end{equation}
Hence $s^*$ is not a strict symmetric Nash equilibrium for any
$\gamma\in[0,\pi/2]$.

It remains to test the second ESS condition. For the same mutant family \eqref{eq:flat-direction}, one has $\cos\theta=0$ and
\begin{equation}
\sin\bigl(2(\phi-\alpha)\bigr)=\sin\left(2\cdot\frac{\pi}{4}\right)=1.
\end{equation}
Therefore \eqref{eq:ESSSecondQ_SU2} reduces to
\begin{equation}
P(s^*,s)-P(s,s)=\frac{r-t}{2}\sin\gamma(1-1)=0.
\end{equation}
So the second ESS condition fails as well. Consequently $s^*$ is not an ESS in the full pure $SU(2)$ strategy space
for any $\gamma\in[0,\pi/2]$.
\end{proof}

\begin{remark}
The theorem does not imply that no ESS exists in the full pure $SU(2)$
strategic space for the game under consideration. It only proves that the
equilibrium arising naturally from the restricted EWL analysis, namely $%
s^*=(\pi/2,\pi/4,0)$, loses evolutionary stability once the strategy space
is enlarged.
\end{remark}

\section{Visual diagnostics of the stability results}
\label{sec:figures}

The preceding algebra can be summarized by normalized payoff gaps. We use the
scale-free normalization
\begin{equation}
\widehat{\Delta}=\frac{2}{r-t}\Delta,
\end{equation}
which is legitimate because all stability comparisons depend only on
$r-t>0$. Thus one may set $r=1$ and $t=0$ for numerical illustration without
changing any evolutionary-stability conclusion.

For the restricted two-parameter EWL space, the normalized Nash gap against
$s^*$ is
\begin{equation}
\widehat{\Delta}_{\mathrm{NE}}^{(2)}
=\sin\gamma\left[1-\sin\theta\,\sin\left(\phi+\frac{\pi}{4}\right)\right].
\label{eq:norm-gap-restricted}
\end{equation}
Figure~\ref{fig:restricted-gap} shows that the gap is flat at zero when
$\gamma=0$, whereas for $\gamma>0$ it is positive everywhere except at the
equilibrium itself. This is the visual signature of strictness in the
restricted strategic domain.

\begin{figure}[H]
\centering
\begin{tikzpicture}
\begin{groupplot}[
    group style={group size=3 by 1, horizontal sep=0.75cm},
    width=0.285\textwidth,
    height=0.245\textwidth,
    view={0}{90},
    domain=0:pi,
    y domain=0:pi/2,
    samples=35,
    xlabel={$\theta$},
    ylabel={$\phi$},
    xtick={0,1.5708,3.1416},
    xticklabels={$0$,$\pi/2$,$\pi$},
    ytick={0,0.7854,1.5708},
    yticklabels={$0$,$\pi/4$,$\pi/2$},
    point meta min=0,
    point meta max=1,
    zmin=0,
    zmax=1,
    enlargelimits=false,
    axis on top,
    tick label style={font=\scriptsize},
    label style={font=\scriptsize},
    title style={font=\scriptsize},
    colormap/viridis]
\nextgroupplot[title={$\gamma=0$}]
\addplot3[surf,shader=interp] {0};
\nextgroupplot[title={$\gamma=\pi/4$}]
\addplot3[surf,shader=interp] {sin(deg(pi/4))*(1 - sin(deg(x))*sin(deg(y+pi/4)))};
\nextgroupplot[
    title={$\gamma=\pi/2$},
    colorbar,
    colorbar style={
        ylabel={$\widehat{\Delta}_{\mathrm{NE}}^{(2)}$},
        ylabel style={font=\scriptsize},
        ytick={0,0.5,1},
        yticklabel style={font=\scriptsize},
        width=0.12cm
    }]
\addplot3[surf,shader=interp] {sin(deg(pi/2))*(1 - sin(deg(x))*sin(deg(y+pi/4)))};
\end{groupplot}
\end{tikzpicture}
\caption{Normalized Nash payoff gap \eqref{eq:norm-gap-restricted} in the restricted two-parameter EWL strategy space. Colour represents $\widehat{\Delta}_{\mathrm{NE}}^{(2)}$ on a common scale with minimum $0$ (dark purple) and maximum $1$ (yellow). The actual panel maxima are $0$, $\sin(\pi/4)=1/\sqrt{2}$, and $1$, respectively. For $\gamma>0$, the gap is positive except at $s^*=(\pi/2,\pi/4)$, showing that the equilibrium is strict and hence an ESS in this restricted space.}
\label{fig:restricted-gap}
\end{figure}

At zero entanglement, the second ESS gap in the restricted model becomes
\begin{equation}
\widehat{\Delta}_{\mathrm{ESS},2}^{(2)}=-\cos^2\theta.
\label{eq:norm-second-gap-zero}
\end{equation}
This quantity is never strictly positive. Figure~\ref{fig:zero-entanglement-gap}
therefore displays the failure of the second ESS condition at $\gamma=0$.

\begin{figure}[H]
\centering
\begin{tikzpicture}
\begin{axis}[
    width=0.72\textwidth,
    height=0.36\textwidth,
    xlabel={$\theta$},
    ylabel={$\widehat{\Delta}_{\mathrm{ESS},2}^{(2)}$},
    domain=0:pi,
    samples=200,
    grid=both,
    xtick={0,1.5708,3.1416},
    xticklabels={$0$,$\pi/2$,$\pi$},
    ytick={-1,-0.5,0},
    ymin=-1.05,
    ymax=0.1]
\addplot[very thick] {-cos(deg(x))^2};
\end{axis}
\end{tikzpicture}
\caption{Normalized second ESS payoff gap at $\gamma=0$. Since $-\cos^2\theta$ is not strictly positive for mutants, the equilibrium corresponding to the classical mixed equilibrium is not evolutionarily stable at zero entanglement.}
\label{fig:zero-entanglement-gap}
\end{figure}

In the full pure $SU(2)$ space, the neutral family
\begin{equation}
\theta=\frac{\pi}{2},\qquad \alpha=\phi-\frac{\pi}{4}\pmod{2\pi}
\end{equation}
annuls both the Nash gap and the second ESS gap. More explicitly, direct
substitution into \eqref{eq:NEdiff_su2} and
\eqref{eq:ESSSecondQ_SU2} gives
\begin{equation}
\widehat{\Delta}_{\mathrm{NE}}^{SU(2)}(\phi)
=\widehat{\Delta}_{\mathrm{ESS},2}^{SU(2)}(\phi)=0,
\qquad 0\leq\phi<2\pi,
\end{equation}
for every $\gamma\in[0,\pi/2]$. Thus the entire one-parameter mutant family
is payoff-neutral, which proves non-strictness and failure of the second ESS
condition without requiring a separate graph.

Figure~\ref{fig:status-map} gives a compact summary of the stability status
of the equilibrium under the three strategic descriptions.

\begin{figure}[H]
\centering
\begin{tikzpicture}[
    cell/.style={draw, rounded corners=1pt, align=center, minimum height=1.0cm, text width=3.15cm, font=\small},
    rowlabel/.style={align=right, text width=3.7cm, font=\small},
    header/.style={font=\bfseries\small, align=center, text width=3.0cm}]
\node[header] (h0) at (0,0) {Strategic description};
\node[header] (h1) at (4.2,0) {$\gamma=0$};
\node[header] (h2) at (7.8,0) {$\gamma>0$};

\node[rowlabel] (r1) at (0,-1.3) {Classical interior mixed equilibrium $p^*=1/2$};
\node[cell] at (4.2,-1.3) {NE, not ESS};
\node[cell] at (7.8,-1.3) {Not applicable};

\node[rowlabel] (r2) at (0,-2.7) {Restricted EWL equilibrium $s^*=(\pi/2,\pi/4)$};
\node[cell] at (4.2,-2.7) {NE, not ESS};
\node[cell] at (7.8,-2.7) {Strict NE; ESS};

\node[rowlabel] (r3) at (0,-4.1) {Full pure $SU(2)$ equilibrium $s^*=(\pi/2,\pi/4,0)$};
\node[cell] at (4.2,-4.1) {NE, not ESS};
\node[cell] at (7.8,-4.1) {Non-strict NE; not ESS};
\end{tikzpicture}
\caption{Strategy-space dependence of evolutionary stability. Entanglement stabilizes the equilibrium in the restricted EWL space, but the stabilization does not survive enlargement to the full pure $SU(2)$ strategy space.}
\label{fig:status-map}
\end{figure}

\section{Discussion and conclusion}

\label{sec:discussion} The analysis yields a clean strategy-space dependent
picture. In the classical game, the interior mixed symmetric equilibrium
$p^*=1/2$ is not an ESS, although the two pure coordinated equilibria are
strict and hence evolutionarily stable. In the restricted two-parameter EWL
space, the corresponding quantum equilibrium $s^*=(\pi/2,\pi/4)$ becomes an
ESS for every $\gamma>0$ because entanglement makes it a strict symmetric
Nash equilibrium. In the full pure $SU(2)$ space, the same equilibrium remains
a symmetric Nash equilibrium, but it is no longer strict and also fails the
second ESS condition.

Thus entanglement stabilizes the equilibrium only in the restricted EWL
strategic domain. The stabilization is not invariant under enlargement of
the strategy space.

This conclusion is conceptually consistent with the broader literature on
EWL-type quantum games. Some of the most striking equilibrium phenomena in
the restricted two-parameter formulation depend on the chosen strategic
domain, and may disappear or be replaced by flat directions once the full
deterministic local $SU(2)$ space is admitted \cite%
{BenjaminHayden,FlitneyHollenberg,Frackiewicz2022}. The present analysis provides an evolutionary-stability counterpart to
that observation.

Several questions remain open. First, one may ask whether there are other
symmetric games for which entanglement affects ESS status in a way that
survives passage to the full pure $SU(2)$ space. Second, one may ask whether
mixed quantum strategies or alternative quantization schemes change the
conclusion obtained here. Third, it would be especially interesting to
construct an example in which entanglement restores evolutionary stability
through the \emph{second} ESS condition rather than merely by creating
strictness in a restricted strategy set.

\appendix

\section{Derivation of the compact two-parameter payoff}

\label{app:derivations} For completeness we outline the algebra leading to
the compact form \eqref{eq:PayoffsQG}. In the older EWL two-parameter
convention the measurement probabilities are given by \eqref{eq:PCC-2p}--%
\eqref{eq:PDC-2p}. Under the symmetry assumptions $s=t$ and $u=r$, the
common payoff is 
\begin{equation}
P(s_A,s_B)=t+(r-t)(P_{CC}+P_{DD}).
\end{equation}
Now 
\begin{align}
P_{CC} &=c_A^2c_B^2\bigl(\cos^2\Phi+\cos^2\gamma\,\sin^2\Phi\bigr) %
=c_A^2c_B^2\bigl(1-\sin^2\gamma\,\sin^2\Phi\bigr), \\
P_{DD} &=\bigl(s_As_B+\sin\gamma\,\sin\Phi\,c_Ac_B\bigr)^2
=s_A^2s_B^2+2\sin\gamma\,\sin\Phi\,c_Ac_Bs_As_B+\sin^2\gamma\,\sin^2\Phi%
\,c_A^2c_B^2.
\end{align}
Adding these gives \eqref{eq:pccpdd-simplified}: 
\begin{equation}
P_{CC}+P_{DD}=c_A^2c_B^2+s_A^2s_B^2+2\sin\gamma\,\sin\Phi\,c_Ac_Bs_As_B.
\end{equation}
Applying the elementary identities 
\begin{equation}
2c_As_A=\sin\theta_A, \qquad 2c_Bs_B=\sin\theta_B, \qquad
c_A^2c_B^2+s_A^2s_B^2=\frac{1+\cos\theta_A\cos\theta_B}{2}
\end{equation}
yields the compact expression \eqref{eq:PayoffsQG}.

\section{Derivation of the full pure \texorpdfstring{$SU(2)$}{SU(2)} formulas%
}

\label{app:su2-derivation} In the full pure $SU(2)$ extension %
\eqref{eq:qstrategy_su2}, write
\begin{equation}
\Phi=\phi_A+\phi_B, \qquad A=\alpha_A+\alpha_B.
\end{equation}
A direct calculation of
\begin{equation}
|\psi_f\rangle=J^{\dagger}(U_A\otimes U_B)J|S_1S_1^{\prime }\rangle
\end{equation}
with the same entangler \eqref{eq:EWLentangler} gives the two amplitudes
relevant under $s=t$ and $u=r$:
\begin{align}
\langle S_1S_1^{\prime}|\psi_f\rangle
&=c_Ac_B\cos\Phi-\sin\gamma\,d_Ad_B\sin A
+i\cos\gamma\,c_Ac_B\sin\Phi,
\\
\langle S_2S_2^{\prime}|\psi_f\rangle
&=\sin\gamma\,c_Ac_B\sin\Phi+d_Ad_B\cos A
+i\cos\gamma\,d_Ad_B\sin A.
\end{align}
Taking squared moduli gives \eqref{eq:PCC-su2} and \eqref{eq:PDD-su2}.
Because $s=t$ and $u=r$, only the sum $P_{CC}+P_{DD}$ enters the common
payoff. To obtain \eqref{eq:Psstar_s_su2}, substitute 
\begin{equation}
s^*=\left(\frac{\pi}{2},\frac{\pi}{4},0\right), \qquad
s=(\theta,\phi,\alpha),
\end{equation}
so that 
\begin{equation}
c_A=d_A=\frac{1}{\sqrt 2}, \qquad \phi_A=\frac{\pi}{4}, \qquad \alpha_A=0,
\qquad c_B=\cos\frac{\theta}{2}, \qquad d_B=\sin\frac{\theta}{2}.
\end{equation}
Substitution into \eqref{eq:PCC-su2} and \eqref{eq:PDD-su2}, followed by
elementary trigonometric simplification, yields 
\begin{equation}
P_{CC}+P_{DD}=\frac12\left[1+\sin\gamma\,\sin\theta\,\sin\left(\phi+\frac{\pi%
}{4}-\alpha\right)\right],
\end{equation}
which gives \eqref{eq:Psstar_s_su2}. Setting $s_A=s_B=s$ yields 
\begin{equation}
P_{CC}+P_{DD}=\frac12\left[1+\cos^2\theta+\sin\gamma\,\sin^2\theta\,\sin%
\bigl(2(\phi-\alpha)\bigr)\right],
\end{equation}
which gives \eqref{eq:Pss_su2}. Subtracting the two expressions leads
immediately to \eqref{eq:ESSSecondQ_SU2}, and comparing $P(s^*,s^*)$ with $%
P(s,s^*)$ gives \eqref{eq:NEdiff_su2}.

\end{document}